\documentclass[12pt,reqno]{amsart}
\usepackage{amscd,amssymb,amsmath,amsthm}
\usepackage[arrow,matrix]{xy}

\usepackage{graphicx}
\usepackage{epsfig}
\usepackage{epstopdf}
\usepackage{epsfig}
\usepackage{amsmath}
\usepackage{amsthm}
\usepackage{blindtext}

\topmargin=0.1in \textwidth5.8in
\textheight7.8in\textheight=22.5truecm \textwidth=15.5truecm
\hoffset=-17mm \voffset=-6mm

\def\cb{{\mathcal B}}



\def\bb{{\mathbb B}}

\def\a{\alpha}
\def\b{\beta}

\def\l{\lambda}

\def\s{\sigma} 

\def\v{\varphi} \def\F{\Phi}

 \def\Om{\Omega}
\def\ab{\overline{a}} \def\bb{\overline{b}} \def\cb{\overline{c}}

\def\xd{x_{\downarrow}}
\def\Cbd{C_{b_{\downarrow}}}
\def\Cb{C_b}

\def\y{\gamma}

\newtheorem{thm}{Theorem}[section]
\newtheorem{lem}[thm]{Lemma}

\newtheorem{defin}[thm]{Definition}

\newtheorem{rem}{Remark}[section]

\begin{document}
\title[Periodic and Weakly Periodic Ground States]{Periodic and Weakly Periodic Ground States for the $\lambda$-Model on Cayley Tree}

\author{Farrukh Mukhamedov}
\address{Farrukh Mukhamedov\\
 Department of Mathematical Sciences\\
College of Science, The United Arab Emirates University\\
P.O. Box, 15551, Al Ain\\
Abu Dhabi, UAE} \email{{\tt far75m@gmail.com} {\tt
farrukh.m@uaeu.ac.ae}}

\author{Chin Hee Pah}
\address{Chin Hee Pah\\
Department of Computational \& Theoretical Sciences,\\
Faculty of Science, International Islamic University Malaysia,\\
P.O. Box, 141, 25710, Kuantan, Pahang, Malaysia} \email{{\tt
pahchinhee@iium.edu.my}}

\author{Muzaffar Rahmatullaev}
\address{Muzaffar Rahmatullaev\\
Institute of Mathematics, National University of Uzbekistan, 29,
Do'rmon Yo'li str., 100125, Tashkent, Uzbekistan.} \email {{\tt
mrahmatullaev@rambler.ru}}

\author{Hakim Jamil}
\address{Hakim Jamil\\
 Department of Computational \& Theoretical Sciences,\\
Faculty of Science, International Islamic University Malaysia,\\
P.O. Box, 141, 25710, Kuantan, Pahang, Malaysia} \email{{\tt
m.hakimjamil@yahoo.com.my}}

\begin{abstract}
In this paper we consider the $\lambda$-model on the Cayley tree of
order two. We describe periodic and weakly periodic ground states
for the considered model.
\end{abstract}

\maketitle

\section{Introduction}

Each Gibbs measure is associated with a single phase of a physical
system \cite{11}. Existence of two or more Gibbs measures means that
phase transitions exist. One of fundamental problems is to describe
the extreme Gibbs measures corresponding to a given Hamiltonian.

As is known, the phase diagram of Gibbs measures for a Hamiltonian
is close to the phase diagram of isolated (stable) ground states of
this Hamiltonian. At low temperatures, a periodic ground state
corresponds to a periodic Gibbs measure, see \cite{5}, \cite{11}.
The problem naturally arises on description of periodic and weakly
periodic ground states. For the Potts model with competing
interactions on the Cayley tree of order two, periodic ground states
were studied in \cite{GMM06,1,MRM07} (see also \cite{Roz}). The
notion of a weakly periodic ground state was introduced in
\cite{10}. For the Ising model with competing interactions, weakly
periodic ground states were described in \cite{6}, \cite{10}. For
the Potts model, such states for normal subgroups of index 2 were
studied in \cite{7}. For the Potts model, such states for normal
subgroups of index 4 were studied in \cite{12}.

On the other hand, it is natural to consider a model which is more
complicated than Potts one, therefore, in \cite{M} it was proposed
to study so-called $\l$-model on the Cayley tree (see also
\cite{13,Roz}). In that model, many possible interactions
(nearest-neighbor) are taken into account. In the mentioned paper,
for a special kind of $\l$-model, its disordered phase has been
studied (see \cite{MR3}) and some its algebraic properties were
investigated. Furthermore, in \cite{Hak} we have described ground
states of the $\l$-model on the Cayley tree of order two. This model
is much more general than Potts model, and exhibits interesting
structure of ground states.

In the present paper, we continue an investigation of the
$\lambda$-model on the Cayley tree of order two and study periodic
and weakly periodic ground states for normal subgroups of index 2.
The paper is organized as follows. In Section 2, we recall the main
definitions and known facts. In Section 3, we study periodic and
weakly periodic ground states. Note that our results extend known
ones \cite{RKh2013,10}. Hence, the proposed model opens new
perspectives in the theory of statistical mechanics over trees.

\section{Preliminaries}
Let $\tau^k = (V,L)$ be a Cayley tree of order
$k$, i.e, an infinite tree such that exactly $k+1$ edges are incident to each vertex. Here $V$ is
the set of vertices and $L$ is the set of edges of $\tau^k$.

Let $G_k$ denote the free product of $k+1$ cyclic groups $\{e, a_i\}$ of order 2 with generators $a_1, a_2,\dots, a_{k+1}$,
i.e., let $a^2_i=e$ (see\cite{karga}).

There exists a one-to-one correspondence between the set $V$ of vertices of the Cayley tree of order $k$ and the group $G_k$, see\cite{Gan},\cite{Roz}.

We show how to construct this correspondence. We choose an arbitrary vertex $x_0 \in V  $and associate it with the identity element $e$ of the group $G_k$. Since we may assume that the graph under consideration is planar, we associate each neighbor of $x_0$ (i.e., $e$) with a single generator $a_i, i=1, 2,\dots, k + 1$, where
the order corresponds to the positive direction, see Figure \ref{cayley}.
\begin{figure}[h!]
    \begin{center}
        \includegraphics[width=12cm]{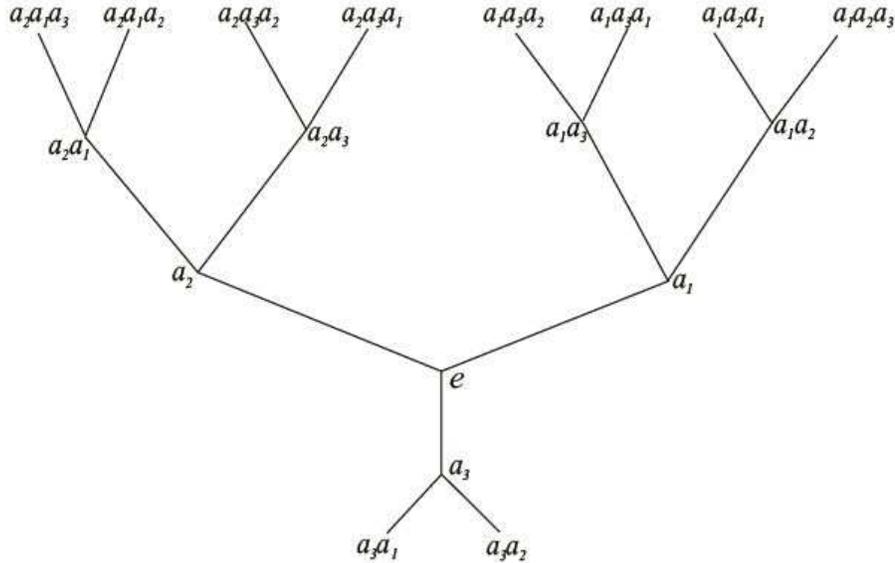}
    \end{center}
    \caption{The Cayley tree $\tau^2$ and elements of the group representation of vertices} \label{cayley}
\end{figure}

For every neighbor of $a_i$, we introduce words of the form $a_ia_j$ . Since one of the neighbors of $a_i$ is $e$, we put $a_ia_i = e$. The remaining neighbors of $a_i$ are labeled according to the above order. For every neighbor of $a_ia_j$ , we introduce words of length 3 in a similar way. Since one of the neighbors of $a_ia_j$ is $a_i$, we put $a_ia_ja_j = a_i$. The remaining neighbors of $a_ia_j$ are labeled by words of the form $a_ia_ja_l$, where $i, j, l = 1, 2, . . . , k + 1$, according to the above procedure. This agrees with the previous stage
because $a_ia_ja_j = a_ia^2_j= a_i$. Continuing this process, we obtain a one-to-one correspondence between the vertex set of the Cayley tree $\tau^k$ and the group $G_k$.

The representation constructed above is said to be $right$ because,
for all adjacent vertices $x$ and $y$ and the corresponding elements
$g,h \in G_k,$ we have either $g = ha_i$ or $h=ga_j$ for suitable
$i$ and $j$. The definition of the $left$ representation is similar.

For the group $G_k$ (or the corresponding Cayley tree), we consider the left (right) shifts. For $g \in G_k$,
we put

\begin{equation*}
    T_g(h)=gh \ (T_g(h)=hg) \ for \ all \ h \in G^k
\end{equation*}
The group of all left (right) shifts on $G_k$ is isomorphic to the group $G_k$.

Each transformation $S$ on the group $G_k$ induces a transformation $S$ on the vertex set $V$ of the Cayley
tree  $\tau^k$. In the sequel, we identify $V$ with $G_k$.

The following assertion is quite obvious (see [2, 9]).

\begin{thm}
    The group of left (right) shifts on the right (left) representation of the Cayley tree is the group of translations.
\end{thm}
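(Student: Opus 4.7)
The plan is to verify two things: first, that each left shift acts as a graph automorphism of the Cayley tree whose vertices are labeled by the right representation, and second, that the resulting action of $G_k$ coincides with the canonical translation action on itself. The entire argument hinges on a one-line computation showing that left multiplication preserves the edge relation induced by the right representation.

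More precisely, suppose $x,y \in V$ are adjacent vertices corresponding to elements $g,h \in G_k$ in the right representation. By construction, there is a generator $a_i$ with either $g=ha_i$ or $h=ga_i$. For any $f\in G_k$, the left shift $T_f$ sends $g\mapsto fg$ and $h\mapsto fh$. Since
\begin{equation*}
fg \;=\; f(ha_i) \;=\; (fh)a_i,
\end{equation*}
(and symmetrically in the other case), the images $T_f(g)$ and $T_f(h)$ satisfy the adjacency condition of the right representation. Hence $T_f$ sends edges to edges and is a graph automorphism of $\tau^k$. The crucial point is that right multiplication by $a_i$ encodes adjacency in the right representation, and left multiplication commutes with right multiplication by associativity — this is exactly why we must pair left shifts with the right representation.

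Next, I would observe that $f\mapsto T_f$ is a group homomorphism from $G_k$ into $\operatorname{Aut}(\tau^k)$, which is injective because $T_f = \operatorname{id}$ forces $f=e$, and whose image acts simply transitively on $V=G_k$. This image is by definition the group of translations of $\tau^k$. The statement for right shifts on the left representation then follows by interchanging the roles of ``left'' and ``right'' in the argument above.

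The main (mild) obstacle here is not mathematical but terminological: one must keep straight the two independent left/right conventions (shift versus representation) and see that the asymmetry $fg = f(ha_i) = (fh)a_i$ versus $gf \neq a_i(hf)$ in general is precisely what forces left shifts to pair with the right representation. Once this is noted, the claim is a direct verification, in agreement with the author's remark that the assertion is ``quite obvious''.
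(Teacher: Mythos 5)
Your argument is correct, and there is nothing in the paper to compare it against: the authors state this theorem without proof, calling it ``quite obvious'' and citing external references, so your write-up simply supplies the standard verification that those sources contain. The computation $fg = f(ha_i) = (fh)a_i$ is indeed the whole content. Two small points worth tightening: (1) you check only that $T_f$ maps edges to edges, whereas an automorphism must preserve non-adjacency as well; this follows at once by applying the same computation to $T_{f^{-1}} = (T_f)^{-1}$, but it should be said. (2) The paper's own definition of ``the group of translations'' is literally ``the automorphism group of the Cayley tree regarded as a graph,'' which cannot be meant verbatim since the left-shift group is countable (isomorphic to $G_k$) while $\operatorname{Aut}(\tau^k)$ is not; your reading --- that the image of $f \mapsto T_f$ is a simply transitive subgroup of automorphisms, which is what ``translations'' should mean --- is the correct interpretation, and flagging that discrepancy is a fair criticism of the paper rather than a gap in your proof.
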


By the group of translations we mean the automorphism group of the
Cayley tree regarded as a graph. Recall (see, for example,
\cite{Melni}) that a mapping $\psi$ on the vertex set of a graph G
is called an automorphism of G if $\psi$ preserves the adjacency
relation, i.e., the images $\psi(u)$  and $\psi(v)$ of vertices $u$
and $v$ are adjacent if and only if $u$ and $v$ are adjacent.

For an arbitrary vertex $x_0 \in V $, we put

\begin{equation*}
    W_{n}=\left\{ x\in V\mid d(x^0,x)=n\right\}, \ \
    V_n=\bigcup\limits_{m=0}^{n}W_m , \ \
    L_{n}=\left\{
    l=<x,y>\in L\mid x,y\in V_{n}\right\}.
\end{equation*}
where $d(x, y)$ is the distance between $x$ and $y$ in the Cayley tree, i.e., the number of edges of the path
between $x$ and $y$.

For each $x \in G_k$, let $S(x)$ denote the set of immediate successor of $x$, i.e., if $x \in W_n$ then

$$
S(x)=\left\{ y\in W_{n+1}:d(x,y)=1\right\}.
$$

For each $x \in G_k$, let $S_1(x)$ denote the set of all neighbors of $x$, i.e., $S_1(x)=\{y \in G_k:<x,y> \in \ L \}$. The set $S_1(x) \setminus S(x)$ is a singleton. Let $\xd$ denote the (unique) element of this set.

Assume that spin takes its values in the set $\F=\{1, 2,\dots, q\}.$
By a configuration $\s$ on $V$ we mean a function taking $\sigma:x
\in V\to \s(x) \in \F.$ The set of all configurations coincides with
the set $\Omega = \F^V$.

Consider the quotient group $G_k/G^*_ k = \{H1,\dots,H_r\}$, where
$G^*_k$ is a normal subgroup of index $r$ with $r \geq 1$.

\begin{defin}
A configuration $\s(x)$ is said to be $G^*_k$-periodic if $\s(x)
=\s_i$ for all $x \in G_k$ with $x \in H_i$. A $G_k$-periodic
configuration is said to be translation invariant.
\end{defin}
By \textit{period} of a periodic configuration we mean the index of
the corresponding normal subgroup.

\begin{defin}
    A configuration $\s(x)$ is said to be $G^*_k$-weakly periodic if $\s(x) =\s_{ij}$ for all $x \in G_k$ with $\xd \in H_i$ and $x \in H_j$.
\end{defin}

The Hamiltonian of the $\l$-model \cite{M,Hak} has a form,
\begin{eqnarray}\label{h}
H(\sigma)=\sum\limits_{<x,y>\in L}\l(\sigma (x),\sigma (y))
\end{eqnarray}

In what follows, we restrict ourself to the case $k=2$ and $\F=\{1,2,3\}$, and for the sake of simplicity, we consider the following function:
\begin{equation}\label{cond}
\l(i,j)=\left\{
\begin{array}{lll}
\ab&if&|i-j|=2,\\
\bb&if&|i-j|=1,\\
\cb&if&i=j,
\end{array}\right.
\end{equation}
where $\ab,\bb,\cb\in \mathbb{R}$ for some given numbers.

\section{Ground States}

In this section, we describe ground state of the $\l$-model on a Cayley tree. For a pair of configurations $\s$ and $\v$ coincinding almost everywhere, i.e., everywhere except finitely many points,we consider the relative Hamiltonian $H(\s,\v)$ determining the energy differences of the configurations $\s$ and $\v$:

\begin{eqnarray}\label{eq12}
H(\s,\v)=\sum_{\substack{<x,y>\\ x,y \in V}}(\l(\s(x),\s(y))-\l(\v(x),\v(y)))
\end{eqnarray}

Let $M$ be the set of unit balls with vertices in $V$, i.e. $M=\{x, S_1(x), \forall x\in V$\}. We call
the restriction of a configuration $\s$ to the ball $b\in M$ a
{\it bounded configuration} $\s_b .$

We define the energy of the configuration $\s_b$ on $b$ as follows
\[U(\s_b)=\dfrac{1}{2}\sum_{\substack{<x,y>\\ x,y \in V}}(\l(\s(x),\s(y)))\]
From \eqref{eq12}, we got the following lemma.

We shall say that two bounded configurations $\s_b$ and $\s'_{b'}$
belong to the same class if $U(\s_b)=U(\s'_{b'})$ and we write
 $\s'_{b'}\sim \s_b. $

\begin{lem}
    Relative Hamiltonian \eqref{eq12} has the form

    \[H(\s,\v)=\sum_{b \in M}(U(\s_b)-U(\v_b)).\]
\end{lem}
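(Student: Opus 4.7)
The plan is to prove the identity by a double-counting argument on edges: each edge of the tree lies in exactly two unit balls, which cancels the factor $\frac{1}{2}$ in the definition of $U$.

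First I would note that since $\sigma$ and $\varphi$ coincide outside a finite subset of $V$, only finitely many edges $\langle x,y\rangle\in L$ contribute to the sum defining $H(\sigma,\varphi)$, and only finitely many unit balls $b\in M$ satisfy $U(\sigma_b)\neq U(\varphi_b)$. Hence all sums below are effectively finite and rearrangements are legitimate. Next, recall that a unit ball $b\in M$ is of the form $\{x\}\cup S_1(x)$ for some center $x\in V$, and the edges of $\tau^k$ contained in $b$ are precisely the edges $\langle x,y\rangle$ with $y\in S_1(x)$. Consequently, an arbitrary edge $\langle u,v\rangle\in L$ belongs to exactly two unit balls, namely the one centered at $u$ and the one centered at $v$.

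Now I would write
\[
\sum_{b\in M}\bigl(U(\sigma_b)-U(\varphi_b)\bigr)
=\sum_{b\in M}\frac{1}{2}\sum_{\substack{\langle x,y\rangle\\ x,y\in b}}\bigl(\lambda(\sigma(x),\sigma(y))-\lambda(\varphi(x),\varphi(y))\bigr),
\]
and then interchange the order of summation. Grouping by edges rather than by balls, each edge $\langle x,y\rangle\in L$ appears with multiplicity equal to the number of unit balls containing it, which by the observation above is exactly $2$. The factor $\frac{1}{2}$ is thereby absorbed, and we obtain
\[
\sum_{b\in M}\bigl(U(\sigma_b)-U(\varphi_b)\bigr)
=\sum_{\langle x,y\rangle\in L}\bigl(\lambda(\sigma(x),\sigma(y))-\lambda(\varphi(x),\varphi(y))\bigr)
=H(\sigma,\varphi),
\]
which is the claimed identity.

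The only real substance is the combinatorial observation that every edge belongs to exactly two unit balls, together with the finiteness remark that justifies treating the infinite sum in \eqref{eq12} as a finite one after cancellation. There is no genuine obstacle; the argument is a short bookkeeping computation once the multiplicity is identified.
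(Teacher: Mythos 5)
Your proof is correct: the paper states this lemma without proof (``From \eqref{eq12}, we got the following lemma''), and the double-counting argument you give --- each edge of the tree lies in exactly the two unit balls centered at its endpoints, which absorbs the factor $\frac{1}{2}$ in $U$, with the finiteness remark justifying the rearrangement --- is precisely the intended justification. No gaps; the only implicit step is that two neighbors of a ball's center are never adjacent (no cycles in a tree), which is what makes the edges of $b$ exactly those incident to its center.
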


\begin{lem}
    The inclusion
    \begin{eqnarray}\label{eq13}
    U(\v_b)\in \left\{\dfrac{\a + \b + \y}{2},\forall \ \a,\b,\y \in \{\ab,\bb,\cb\}\right\}
    \end{eqnarray}
    holds for every configuration $\v_b$ on b$(b \in M)$.
\end{lem}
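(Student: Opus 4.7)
The plan is to unpack the definition of $U(\v_b)$ and simply count the contributing edges. A unit ball $b = \{x\} \cup S_1(x)$ centered at $x$ contains the vertex $x$ together with its $k+1 = 3$ neighbors, since we are working on the Cayley tree $\tau^2$. Because $\tau^2$ is a tree, there are no edges among the elements of $S_1(x)$ (no triangles), so the edges supported inside $b$ are exactly the three edges $\langle x, y\rangle$ with $y \in S_1(x)$.

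Now I would substitute this into the defining formula
\[
U(\v_b) \;=\; \frac{1}{2}\sum_{\langle x,y\rangle \subset b}\lambda\bigl(\v(x),\v(y)\bigr),
\]
so the sum has exactly three summands. By the explicit form of $\lambda$ in \eqref{cond}, every value $\lambda(\v(x),\v(y))$ lies in $\{\ab,\bb,\cb\}$, regardless of the values taken by $\v$. Hence each of the three terms is one of $\ab,\bb,\cb$, and if I call them $\alpha,\beta,\gamma$ (in any order), I obtain
\[
U(\v_b) \;=\; \frac{\alpha+\beta+\gamma}{2}, \qquad \alpha,\beta,\gamma\in\{\ab,\bb,\cb\},
\]
which is precisely the claimed inclusion \eqref{eq13}.

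There is really no obstacle here; the lemma is a direct consequence of the tree structure (three edges per ball) combined with the fact that $\lambda$ takes only the three values $\ab,\bb,\cb$. The only point that deserves a brief justification is why exactly three edges contribute: I would state it as a one-line remark that in a tree the neighbors of a common vertex are pairwise non-adjacent, so the edge set of the ball coincides with $\{\langle x,y\rangle : y\in S_1(x)\}$, which has cardinality $k+1 = 3$.
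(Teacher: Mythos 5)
Your proof is correct and is essentially the argument the paper has in mind: the paper states this lemma without proof, treating it as immediate from the fact that a unit ball on $\tau^2$ contains exactly three edges (center to each neighbor) and $\l$ takes values only in $\{\ab,\bb,\cb\}$, which is exactly your counting and also what underlies the subsequent enumeration of the ten values $U_1,\dots,U_{10}$.
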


\begin{defin}
    A configuration $\v$ is called a ground state of the relative Hamiltonian H if
    \begin{eqnarray}\label{eq14}
    U(\v_b)= \min \left\{\dfrac{\a + \b + \y}{2},\forall \ \a,\b.\y \in \{\ab,\bb,\cb\}\right\}
    \end{eqnarray}
    for any $b \in M$

\end{defin}

If a ground state is a periodic (resp. weakly periodic, translation
invariant) configuration then we call it a periodic (resp. weakly
periodic, translation invariant) ground state.

For any configuration $\s_b$, we have
\[U(\s_b)\in\{U_1,U_2,U_3,U_4,U_5,U_6,U_7,U_8,U_9,U_{10}\},\]

where
\begin{eqnarray}\label{U_n}
U_1=3\ab/2,&U_2=3\bb/2,\nonumber\\
U_3=3\cb/2,&U_4=\ab/2+\bb,\nonumber\\
U_5=\ab/2+\cb,&U_6=\bb/2+\ab,\nonumber\\
U_7=\bb/2+\cb,&U_8=\cb/2+\ab,\nonumber\\
U_9=\cb/2+\bb,&U_{10}=(\ab+\bb+\cb)/2.
\end{eqnarray}
We denote
\begin{eqnarray}\label{eq15}
A_m=\{(\ab,\bb,\cb)\in \mathbb{R}^3|\  U_m=\min_{1 \leq k \leq 10}\{U_k\}\}
\end{eqnarray}
Using \eqref{eq15}, we obtain
$$\begin{array}{llllllllll}
A_1=\left\{(\ab,\bb,\cb)\in \mathbb{R}^3| \ \ab \leq \bb\leq \cb\right\}\cup \left\{(\ab,\bb,\cb)\in \mathbb{R}^3| \ \ab \leq \cb \leq \bb\right\},\\
A_2=\left\{(\ab,\bb,\cb)\in \mathbb{R}^3| \ \bb \leq \cb\leq \ab\right\}\cup \left\{(\ab,\bb,\cb)\in \mathbb{R}^3| \ \bb \leq \ab\leq \cb\right\},\\
A_3=\left\{(\ab,\bb,\cb)\in \mathbb{R}^3| \ \cb \leq \bb\leq \ab\right\}\cup \left\{(\ab,\bb,\cb)\in \mathbb{R}^3| \ \cb \leq \ab \leq \bb\right\},\\
A_4=\left\{(\ab,\bb,\cb)\in \mathbb{R}^3| \ \ab=\bb \leq \cb\right\},\\
A_5=\left\{(\ab,\bb,\cb)\in \mathbb{R}^3| \ \ab=\cb \leq \bb\right\},\\
A_6=\left\{(\ab,\bb,\cb)\in \mathbb{R}^3| \ \ab=\bb \leq \cb\right\},\\
A_7=\left\{(\ab,\bb,\cb)\in \mathbb{R}^3| \ \bb=\cb \leq \ab\right\},\\
A_8=\left\{(\ab,\bb,\cb)\in \mathbb{R}^3| \ \ab=\cb \leq \bb\right\},\\
A_9=\left\{(\ab,\bb,\cb)\in \mathbb{R}^3| \ \bb=\cb \leq \ab\right\},\\
A_{10}=\left\{(\ab,\bb,\cb)\in \mathbb{R}^3| \ \ab=\bb=\cb\right\}, \mathbb{R}^3=\cup_{i=1}^{10}A_i.\\
\end{array}$$

Let us introduce some notations. We put
$$C_i=\{\s_b \in \Om_b: \
U(\s_b)=u_i \}, \ \ \ i=\overline{1,10}$$ and $B^{(i)}=|\{x \in
S_1(k):\ \v_b(x)=i\}|$ for $i=\overline{1,3}$.

Let $A\subset \{1,2,...,k+1\}$, $H_A=\{x\in G_k: \sum_{j\in
A}w_j(x)-$even$\},$ where $w_j(x)$-is the number of letters $a_j$
in the word $x.$

It is obvious, that $H_A$ is a normal subgroup of index two
\cite{Roz}. Let $G_k/H_A=\{H_A,G_k\setminus H_A\}$ be the quotient
group. We set $H_0=H_A, H_1=G_k\setminus H_A$.

\section{Periodic Ground States}

In this section, we describe $H_0$-periodic ground states. We note
that each $H_0$-periodic configurations has the following form:
\begin{equation}\label{per} \s(x)=\left\{
\begin{array}{lll}
\s_1,&\textrm{if}&\ x\in H_0,\\
\s_2,&\textrm{if}&\ x\in H_1
\end{array}\right.
\end{equation}
where $\s_i \in \F=\{1,2,3\}$ ($i=1, 2$).

\begin{thm}\label{th1}
    Let k=2 and $|A|=1$. Then for the $\l$-model given by \eqref{cond} the following statements hold:
\begin{itemize}
\item[(i)] let $|\s_1-\s_2|=0$, then $H_0$-periodic ground state exists if and only if the parameters $\ab,\bb,\cb$ belong to  $A_3$;

\item[(ii)] If $|\s_1-\s_2|=1$, then  $H_0$-periodic ground state exists if and only if the parameters $\ab,\bb,\cb$ belong to  $A_7$;

\item[(iii)] If $|\s_1-\s_2|=2$, then $H_0$-periodic ground state exists if and only if the parameters $\ab,\bb,\cb$ belong to
$A_5$.
\end{itemize}
\end{thm}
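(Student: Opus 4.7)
The plan is to exploit the fact that an $H_0$-periodic configuration $\s$ of the form \eqref{per} produces only two essentially distinct configurations on unit balls --- one for balls centered in $H_0$ and one for balls centered in $H_1$ --- so the ground-state condition $U(\s_b)=\min_k U_k$ from \eqref{eq14} collapses to matching these two ball energies against the list \eqref{U_n} and reading off the corresponding region $A_m$ from \eqref{eq15}.

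First I would analyze the combinatorial structure of $H_0=H_A$ when $|A|=1$, say $A=\{i_0\}$. Any edge of the Cayley tree joins $x$ to $xa_j$ for exactly one generator $a_j$; traversing such an edge flips the parity of $w_j(\cdot)$ and preserves the parity of every $w_{j'}$ with $j'\neq j$. Consequently, among the three neighbors of a vertex $x\in H_0$ exactly two (those obtained via $a_j$ with $j\neq i_0$) lie again in $H_0$ and exactly one (obtained via $a_{i_0}$) lies in $H_1$; symmetrically for $x\in H_1$. This is the key combinatorial input that is special to the case $k=2$, $|A|=1$.

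Next I would compute $U(\s_b)$ on a unit ball $b$ centered at $x$. If $x\in H_0$, the center and two neighbors carry value $\s_1$ while the remaining neighbor carries $\s_2$, so two edges contribute $\l(\s_1,\s_1)=\cb$ and one contributes $\l(\s_1,\s_2)$; thus $U(\s_b)=\cb+\l(\s_1,\s_2)/2$. The same expression arises for $x\in H_1$ by symmetry. Substituting according to \eqref{cond}: if $|\s_1-\s_2|=0$ then $\l(\s_1,\s_2)=\cb$ and $U(\s_b)=3\cb/2=U_3$; if $|\s_1-\s_2|=1$ then $\l(\s_1,\s_2)=\bb$ and $U(\s_b)=\cb+\bb/2=U_7$; if $|\s_1-\s_2|=2$ then $\l(\s_1,\s_2)=\ab$ and $U(\s_b)=\cb+\ab/2=U_5$.

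To finish, I would invoke Definition \eqref{eq14}: the configuration $\s$ is a ground state iff $U(\s_b)=\min_{1\le k\le 10}U_k$ on every ball, which --- given the three identifications above --- is equivalent to $(\ab,\bb,\cb)\in A_3$, $A_7$, and $A_5$ respectively, exactly the three claims of the theorem. I do not anticipate a genuine obstacle: the only non-trivial step is the neighbor count $|S_1(x)\cap H_0|=2$ for $x\in H_0$, after which everything reduces to arithmetic matching against the tabulated list of $U_m$'s. One should however be careful to note that in case (i), where $\s_1=\s_2$, the configuration is in fact translation invariant, but it still qualifies as $H_0$-periodic, so it is legitimately covered by the statement.
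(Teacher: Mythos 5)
Your proposal is correct and follows essentially the same route as the paper: both reduce the problem to computing the single ball energy $U(\s_b)=\cb+\l(\s_1,\s_2)/2$ of an $H_0$-periodic configuration (via the count that each vertex has exactly two neighbors in its own coset and one in the other when $|A|=1$) and then matching it against $U_3$, $U_7$, $U_5$ to read off $A_3$, $A_7$, $A_5$. Your version is marginally cleaner in making the coset--neighbor count explicit and in stating the ``if and only if'' as a direct consequence of the definition, but there is no substantive difference.
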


\begin{proof}
Let us consider the following configuration
    \begin{equation}\label{trin1}
    \v(x)=\left\{
    \begin{array}{lll}
    i,&\textrm{if}&\ x\in H_0\\
    i,&\textrm{if}&\ x\in H_1,
    \end{array}\right.
    \end{equation} where $i=1,2,3$.

By $C_b$ we denote the center of $b\in M$. Let $C_b \in H_0$, then
we have
    \begin{equation*}
        \v_{b}(C_b)=i,\ B^{(i)}=3;
    \end{equation*}
Hence, $\v_{b}(x) \in C_3$, i.e. if $(\ab,\bb,\cb)\in A_3$ then the
corresponding configuration is a ground state.

Now consider the following configuration
    \begin{equation}\label{per1}
    \v(x)=\left\{
    \begin{array}{lll}
    i,&\textrm{if}&\ x\in H_0,\\
    j,&\textrm{if}&\ x\in H_1,
    \end{array}\right.
    \end{equation} where $|i-j|=1$

Assume that $C_b \in H_0$, the one finds
    \begin{equation*}
        \v_{b}(C_b)=i,\ B^{(i)}=2,\ B^{(j)}=1;
    \end{equation*}
    Hence, $\v_{b}(x) \in C_7$.

    Let $C_b \in H_1$, one has
    \begin{equation*}
    \v_{b}(C_b)=j,\ B^{(i)}=1,\ B^{(j)}=2;
    \end{equation*}
    Hence, $\v_{b}(x) \in C_7$.

We conclude that, if $(\ab,\bb,\cb)\in A_7$ then the corresponding
periodic configuration $\v(x)$ is a $H_0$-periodic ground state.

Let us consider the following configuration
    \begin{equation}\label{per1}
    \v(x)=\left\{
    \begin{array}{lll}
    i,&\textrm{if}&\ x\in H_0,\\
    j,&\textrm{if}&\ x\in H_1,
    \end{array}\right.
    \end{equation} where $|i-j|=2$

Let $C_b \in H_0$, we have
    \begin{equation*}
        \v_{b}(C_b)=i,\ B^{(i)}=2,\ B^{(j)}=1;
    \end{equation*}
    Hence, $\v_{b}(x) \in C_5$.

    Let $C_b \in H_1$, one gets
    \begin{equation*}
    \v_{b}(C_b)=j,\ B^{(i)}=1,\ B^{(j)}=2;
    \end{equation*}
So, $\v_{b}(x) \in C_5$. Therefore, if $(\ab,\bb,\cb)\in A_5$ then
the corresponding periodic configuration $\v(x)$ is $H_0$-periodic
ground state. This completes the proof.
\end{proof}

\begin{rem}
Notice that periodic configurations (ground states) described in (i)
Theorem \ref{th1} are translation-invariant. Full description of
translation-invariant ground states has been carried out in
\cite{Hak}.
\end{rem}

\section{Weakly Periodic Ground States}

In this section, we describe $H_A$-weakly periodic ground states. In
what follows, $H_A$ stands for the normal subgroup of index two. Due
to the definition of weakly periodic configuration, we infer that
each $H_A$-weakly periodic configuration has the following form:
\begin{equation}\label{gwper}
\v(x)=\left\{
    \begin{array}{llll}
        \s_1&if&\xd \in H_0,& x \in H_0,\\
        \s_2&if&\xd \in H_0,& x \in H_1,\\
        \s_3&if&\xd \in H_1,& x \in H_0,\\
        \s_4&if&\xd \in H_1,& x \in H_1,
    \end{array}\right.
\end{equation}
where $\s_i \in \F=\{1,2,3\}$, ($i=1, 2, 3, 4$).

Hence, all possible weakly periodic (non periodic) ground states
have the following forms:

\begin{minipage}{0.5\textwidth}
    \begin{equation*} 
    \v_1(x)=\left\{
    \begin{array}{llll}
    1&if&\xd \in H_0,& x \in H_0,\\
    1&if&\xd \in H_0,& x \in H_1,\\
    2&if&\xd \in H_1,& x \in H_0,\\
    2&if&\xd \in H_1,& x \in H_1,
    \end{array}\right.
    \end{equation*}
\end{minipage}
\begin{minipage}{0.5\textwidth}
    \begin{equation*} 
    \v_2(x)=\left\{
    \begin{array}{llll}
    1&if&\xd \in H_0,& x \in H_0,\\
    2&if&\xd \in H_0,& x \in H_1,\\
    2&if&\xd \in H_1,& x \in H_0,\\
    1&if&\xd \in H_1,& x \in H_1,
    \end{array}\right.
    \end{equation*}
\end{minipage}

\begin{minipage}{0.5\textwidth}
    \begin{equation*} 
    \v_3(x)=\left\{
    \begin{array}{llll}
    1&if&\xd \in H_0,& x \in H_0,\\
    2&if&\xd \in H_0,& x \in H_1,\\
    2&if&\xd \in H_1,& x \in H_0,\\
    3&if&\xd \in H_1,& x \in H_1,
    \end{array}\right.
    \end{equation*}
\end{minipage}
\begin{minipage}{0.5\textwidth}
    \begin{equation*} 
    \v_4(x)=\left\{
    \begin{array}{llll}
    2&if&\xd \in H_0,& x \in H_0,\\
    1&if&\xd \in H_0,& x \in H_1,\\
    1&if&\xd \in H_1,& x \in H_0,\\
    1&if&\xd \in H_1,& x \in H_1,
    \end{array}\right.
    \end{equation*}
\end{minipage}

\begin{minipage}{0.5\textwidth}
    \begin{equation*} 
    \v_5(x)=\left\{
    \begin{array}{llll}
    2&if&\xd \in H_0,& x \in H_0,\\
    1&if&\xd \in H_0,& x \in H_1,\\
    2&if&\xd \in H_1,& x \in H_0,\\
    2&if&\xd \in H_1,& x \in H_1,
    \end{array}\right.
    \end{equation*}
\end{minipage}
\begin{minipage}{0.5\textwidth}
    \begin{equation*} 
    \v_{6}(x)=\left\{
    \begin{array}{llll}
    2&if&\xd \in H_0,& x \in H_0,\\
    1&if&\xd \in H_0,& x \in H_1,\\
    3&if&\xd \in H_1,& x \in H_0,\\
    2&if&\xd \in H_1,& x \in H_1,
    \end{array}\right.
    \end{equation*}
\end{minipage}

\begin{minipage}{0.5\textwidth}
    \begin{equation*} 
    \v_{7}(x)=\left\{
    \begin{array}{llll}
    2&if&\xd \in H_0,& x \in H_0,\\
    2&if&\xd \in H_0,& x \in H_1,\\
    1&if&\xd \in H_1,& x \in H_0,\\
    2&if&\xd \in H_1,& x \in H_1,
    \end{array}\right.
    \end{equation*}
\end{minipage}
\begin{minipage}{0.5\textwidth}
    \begin{equation*} 
    \v_{8}(x)=\left\{
    \begin{array}{llll}
    2&if&\xd \in H_0,& x \in H_0,\\
    2&if&\xd \in H_0,& x \in H_1,\\
    2&if&\xd \in H_1,& x \in H_0,\\
    1&if&\xd \in H_1,& x \in H_1,
    \end{array}\right.
    \end{equation*}
\end{minipage}

\begin{minipage}{0.5\textwidth}
    \begin{equation*} 
    \v_{9}(x)=\left\{
    \begin{array}{llll}
    2&if&\xd \in H_0,& x \in H_0,\\
    2&if&\xd \in H_0,& x \in H_1,\\
    2&if&\xd \in H_1,& x \in H_0,\\
    3&if&\xd \in H_1,& x \in H_1,
    \end{array}\right.
    \end{equation*}
\end{minipage}
\begin{minipage}{0.5\textwidth}
    \begin{equation*} 
    \v_{10}(x)=\left\{
    \begin{array}{llll}
    2&if&\xd \in H_0,& x \in H_0,\\
    2&if&\xd \in H_0,& x \in H_1,\\
    3&if&\xd \in H_1,& x \in H_0,\\
    2&if&\xd \in H_1,& x \in H_1,
    \end{array}\right.
    \end{equation*}
\end{minipage}

\begin{minipage}{0.5\textwidth}
    \begin{equation*} 
    \v_{11}(x)=\left\{
    \begin{array}{llll}
    2&if&\xd \in H_0,& x \in H_0,\\
    2&if&\xd \in H_0,& x \in H_1,\\
    1&if&\xd \in H_1,& x \in H_0,\\
    1&if&\xd \in H_1,& x \in H_1,
    \end{array}\right.
    \end{equation*}
\end{minipage}
\begin{minipage}{0.5\textwidth}
    \begin{equation*} 
    \v_{12}(x)=\left\{
    \begin{array}{llll}
    3&if&\xd \in H_0,& x \in H_0,\\
    3&if&\xd \in H_0,& x \in H_1,\\
    3&if&\xd \in H_1,& x \in H_0,\\
    2&if&\xd \in H_1,& x \in H_1,
    \end{array}\right.
    \end{equation*}
\end{minipage}

\begin{minipage}{0.5\textwidth}
    \begin{equation*} 
    \v_{13}(x)=\left\{
    \begin{array}{llll}
    1&if&\xd \in H_0,& x \in H_0,\\
    1&if&\xd \in H_0,& x \in H_1,\\
    1&if&\xd \in H_1,& x \in H_0,\\
    2&if&\xd \in H_1,& x \in H_1,
    \end{array}\right.
    \end{equation*}
\end{minipage}
\begin{minipage}{0.5\textwidth}
    \begin{equation*} 
    \v_{14}(x)=\left\{
    \begin{array}{llll}
    1&if&\xd \in H_0,& x \in H_0,\\
    2&if&\xd \in H_0,& x \in H_1,\\
    2&if&\xd \in H_1,& x \in H_0,\\
    2&if&\xd \in H_1,& x \in H_1,
    \end{array}\right.
    \end{equation*}
\end{minipage}

\begin{minipage}{0.5\textwidth}
    \begin{equation*} 
    \v_{15}(x)=\left\{
    \begin{array}{llll}
    1&if&\xd \in H_0,& x \in H_0,\\
    3&if&\xd \in H_0,& x \in H_1,\\
    3&if&\xd \in H_1,& x \in H_0,\\
    3&if&\xd \in H_1,& x \in H_1,
    \end{array}\right.
    \end{equation*}
\end{minipage}
\begin{minipage}{0.5\textwidth}
    \begin{equation*} 
    \v_{16}(x)=\left\{
    \begin{array}{llll}
    3&if&\xd \in H_0,& x \in H_0,\\
    1&if&\xd \in H_0,& x \in H_1,\\
    1&if&\xd \in H_1,& x \in H_0,\\
    3&if&\xd \in H_1,& x \in H_1,
    \end{array}\right.
    \end{equation*}
\end{minipage}

\begin{minipage}{0.5\textwidth}
    \begin{equation*} 
    \v_{17}(x)=\left\{
    \begin{array}{llll}
    3&if&\xd \in H_0,& x \in H_0,\\
    3&if&\xd \in H_0,& x \in H_1,\\
    1&if&\xd \in H_1,& x \in H_0,\\
    1&if&\xd \in H_1,& x \in H_1,
    \end{array}\right.
    \end{equation*}
\end{minipage}
\begin{minipage}{0.5\textwidth}
    \begin{equation*} 
    \v_{18}(x)=\left\{
    \begin{array}{llll}
    3&if&\xd \in H_0,& x \in H_0,\\
    3&if&\xd \in H_0,& x \in H_1,\\
    1&if&\xd \in H_1,& x \in H_0,\\
    3&if&\xd \in H_1,& x \in H_1,
    \end{array}\right.
    \end{equation*}
\end{minipage}

\begin{minipage}{0.5\textwidth}
    \begin{equation*} 
    \v_{19}(x)=\left\{
    \begin{array}{llll}
    3&if&\xd \in H_0,& x \in H_0,\\
    3&if&\xd \in H_0,& x \in H_1,\\
    3&if&\xd \in H_1,& x \in H_0,\\
    1&if&\xd \in H_1,& x \in H_1,
    \end{array}\right.
    \end{equation*}
\end{minipage}
\begin{minipage}{0.5\textwidth}
    \begin{equation*} 
    \v_{20}(x)=\left\{
    \begin{array}{llll}
    1&if&\xd \in H_0,& x \in H_0,\\
    3&if&\xd \in H_0,& x \in H_1,\\
    3&if&\xd \in H_1,& x \in H_0,\\
    1&if&\xd \in H_1,& x \in H_1,
    \end{array}\right.
    \end{equation*}
\end{minipage}

\begin{thm}\label{th2}
    Let $k=2$ and $|A|=1$. Then for the $\l$-model given by \eqref{cond} the following statements hold:
    \begin{itemize}
        \item[I.] There exist fourteen $H_A$-weakly periodic ground states if $(\ab,\bb,\cb)\in A_9$ that are not $H_A$-periodic, and they are
        described by $\v_i$, $i=1,\dots,14$. Moreover, they are not ground states on the set $\mathbb{R}^3\setminus A_9$.
        \item[II.] There exist six $H_A$-weakly periodic ground states if $(\ab,\bb,\cb)\in A_5$  that are not $H_A$-periodic, and they are described by
        $\v_i$, $i=15,\dots,20$, and they are not ground states on the set $\mathbb{R}^3\setminus A_5$.
        \item[III.] All $H_A$-weakly periodic (non periodic, in particular non translation-invariant) configurations (not mentioned in assertions I and
        II) are not weakly periodic ground states.
    \end{itemize}
\end{thm}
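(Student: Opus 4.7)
The plan is to proceed by exhaustive classification of the $H_A$-weakly periodic configurations of the form \eqref{gwper}, parametrized by the quadruples $(\s_1,\s_2,\s_3,\s_4)\in\F^4$. Discarding the nine genuinely $H_A$-periodic ones (those with $\s_1=\s_3$ and $\s_2=\s_4$) leaves $72$ non-periodic weakly periodic candidates; for each one we must decide for which parameters $(\ab,\bb,\cb)$ the equality $U(\v_b)=\min_k U_k$ holds on \emph{every} ball $b\in M$.

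The essential structural input is that, since $|A|=1$, say $A=\{1\}$, every vertex $x\in G_k$ has exactly one neighbor lying in the opposite $H_A$-class, namely $xa_1$, and two neighbors in the same class, namely $xa_2$ and $xa_3$. Consequently a ball $b$ centered at $x$ is determined by three binary labels: the class of $x$, whether $\xd$ coincides with $xa_1$ or with one of $\{xa_2,xa_3\}$, and the class of the grandparent $(\xd)_\downarrow$, the latter being required to identify $\s(\xd)$ via \eqref{gwper}. This produces eight ball types, all of which occur somewhere in the tree. For each candidate $\v$ and each ball type, one reads off $\s(x)$ and the neighbor multiplicities $B^{(i)}$, then identifies $U(\v_b)$ with one of $U_1,\dots,U_{10}$ in \eqref{U_n}, exactly as in the proof of Theorem~\ref{th1}.

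A direct computation then shows that for each of $\v_1,\dots,\v_{14}$, every ball energy $U(\v_b)$ coincides with the global minimum precisely when $(\ab,\bb,\cb)\in A_9=\{\bb=\cb\le\ab\}$; this gives part I. Symmetrically, for each of $\v_{15},\dots,\v_{20}$ the coincidence is realized precisely on $A_5=\{\ab=\cb\le\bb\}$; this gives part II. For part III, for each remaining non-periodic weakly periodic configuration one exhibits at least one ball whose energy, as a function of $(\ab,\bb,\cb)$, cannot coincide with the minimum on any region on which the other ball energies are also minimal, so the configuration fails to be a ground state anywhere.

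The principal obstacle is the sheer volume of the case analysis, up to $72$ configurations times $8$ ball types each. This is considerably reduced by the $1\leftrightarrow 3$ symmetry of $\lambda$ in \eqref{cond} (since $|1-2|=|3-2|=1$ while $|1-3|=2$), which pairs configurations, and by the guiding elimination rules that any ball containing an edge of type $|i-j|=2$ prevents a ground state on $A_9$ (which forbids the value $\ab$ in the minimum), and any ball containing an edge of type $|i-j|=1$ prevents a ground state on $A_5$ (which forbids $\bb$). These tests prune the $52$ failing cases quickly, leaving only the listed $20$ configurations requiring explicit verification.
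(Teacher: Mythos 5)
Your proposal follows essentially the same route as the paper: an exhaustive enumeration of the non-periodic $H_A$-weakly periodic configurations, a case-by-case computation of the ball energies $U(\v_b)$ via the classes $C_i$ and the multiplicities $B^{(i)}$, and intersection of the resulting regions $A_i$ to get $A_9$ for $\v_1,\dots,\v_{14}$ and $A_5$ for $\v_{15},\dots,\v_{20}$, with the remaining cases eliminated by exhibiting a non-minimal ball. Your added observations (the $1\leftrightarrow 3$ symmetry and the edge-type elimination rules on $A_9$ and $A_5$) are sound shortcuts that the paper does not state explicitly, but they do not change the underlying argument.
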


\begin{proof}

    \begin{itemize}
        \item[1)] Consider
        \begin{equation*}
        \v_1(x)=\left\{
        \begin{array}{llll}
        1&if&\xd \in H_0,& x \in H_0,\\
        1&if&\xd \in H_0,& x \in H_1,\\
        2&if&\xd \in H_1,& x \in H_0,\\
        2&if&\xd \in H_1,& x \in H_1.
        \end{array}\right.
        \end{equation*}
        Let $\Cb \in H_0$, then we have the following possible cases
        \begin{itemize}
            \item[a)] $\Cbd \in H_0$ and $\v_{1,b}(\Cbd)=1$, then $\v_{1,b}(\Cb)=1,\ B^{(1)}=3, \ \v_{1,b} \in C_3$,
            \item[b)] $\Cbd \in H_0$ and $\v_{1,b}(\Cbd)=2$, then $\v_{1,b}(\Cb)=1,\ B^{(1)}=2,\ B^{(2)}=1, \ \v_{1,b} \in C_7$,
            \item[c)] $\Cbd \in H_1$ and $\v_{1,b}(\Cbd)=1$, this case is not exist, if $\Cbd\in H_1$, then $\v_{1,b}(\Cbd)=2$,
            \item[d)] $\Cbd \in H_1$ and $\v_{1,b}(\Cbd)=2$, then $\v_{1,b}(\Cb)=2,\ B^{(1)}=2,\ B^{(2)}=1, \ \v_{1,b} \in C_9$.

        \end{itemize}
        Let $\Cb \in H_1$, then the possible cases are
        \begin{itemize}
            \item[a)] $\Cbd \in H_0$ and $\v_{1,b}(\Cbd)=1$, then $\v_{1,b}(\Cb)=1,\ B^{(1)}=1,\ B^{(2)}=2, \ \v_{1,b} \in C_9$,
            \item[b)] $\Cbd \in H_0$ and $\v_{1,b}(\Cbd)=2$, this case is not exist, if $\Cbd\in H_0$, then $\v_{1,b}(\Cbd)=1$,
            \item[c)] $\Cbd \in H_1$ and $\v_{1,b}(\Cbd)=1$, then $\v_{1,b}(\Cb)=2,\ B^{(1)}=1,\ B^{(2)}=2,\ \v_{1,b} \in C_7$,
            \item[d)] $\Cbd \in H_1$ and $\v_{1,b}(\Cbd)=2$, then $\v_{1,b}(\Cb)=2,\ B^{(2)}=3,\ \v_{1,b} \in C_3$.
        \end{itemize}
        Therefore, the configuration $\v_1$ is a weakly periodic ground state on the set $A_3 \cap A_7 \cap A_9=A_7=A_9$.

        \item[2)] Consider
        \begin{equation*}
        \v_2(x)=\left\{
        \begin{array}{llll}
        1&if&\xd \in H_0,& x \in H_0,\\
        2&if&\xd \in H_0,& x \in H_1,\\
        2&if&\xd \in H_1,& x \in H_0,\\
        1&if&\xd \in H_1,& x \in H_1.
        \end{array}\right.
        \end{equation*}
        Let $\Cb \in H_0$, the possible cases are
        \begin{itemize}
            \item[a)] $\Cbd \in H_0$ and $\v_{2,b}(\Cbd)=1$, then $\v_{2,b}(\Cb)=2,\ B^{(1)}=2,\ B^{(2)}=1, \ \v_{2,b} \in C_7$,
            \item[b)] $\Cbd \in H_0$ and $\v_{2,b}(\Cbd)=2$, then $\v_{2,b}(\Cb)=2,\ B^{(1)}=1,\ B^{(2)}=2, \ \v_{2,b} \in C_9$,
            \item[c)] $\Cbd \in H_1$ and $\v_{2,b}(\Cbd)=1$, then $\v_{2,b}(\Cb)=2,\ B^{(1)}=3, \ \v_{2,b} \in C_2$,
            \item[d)] $\Cbd \in H_1$ and $\v_{2,b}(\Cbd)=2$, this case is not exist, if $\Cbd\in H_1$, then $\v_{2,b}(\Cbd)=1$.
        \end{itemize}
            Let $\Cb \in H_1$, the possible cases are
            \begin{itemize}
                \item[a)] $\Cbd \in H_0$ and $\v_{2,b}(\Cbd)=1$, then $\v_{2,b}(\Cb)=2,\ B^{(1)}=3, \ \v_{2,b} \in C_2$,
                \item[b)] $\Cbd \in H_0$ and $\v_{2,b}(\Cbd)=2$, this case is not exist, if $\Cbd\in H_0$, then $\v_{2,b}(\Cbd)=1$,
                \item[c)] $\Cbd \in H_1$ and $\v_{2,b}(\Cbd)=1$, then $\v_{2,b}(\Cb)=1,\ B^{(1)}=2,\ B^{(2)}=1,\ \v_{2,b}\in C_7$,
                \item[d)] $\Cbd \in H_1$ and $\v_{2,b}(\Cbd)=2$, then $\v_{2,b}(\Cb)=1,\ B^{(1)}=1,\ B^{(2)}=2,\ \v_{2,b}\in C_9$.
            \end{itemize}
            Therefore, configuration $\v_2$ a weakly periodic ground state on the set $A_2 \cap A_7 \cap A_9=A_7=A_9$.

        \item[3)] Consider
            \begin{equation*}
            \v_3(x)=\left\{
            \begin{array}{llll}
            1&if&\xd \in H_0,& x \in H_0,\\
            2&if&\xd \in H_0,& x \in H_1,\\
            2&if&\xd \in H_1,& x \in H_0,\\
            3&if&\xd \in H_1,& x \in H_1.
            \end{array}\right.
            \end{equation*}
            Let $\Cb \in H_0$, the possible cases are
            \begin{itemize}
                \item[a)] $\Cbd \in H_0$ and $\v_{3,b}(\Cbd)=1$, then $\v_{3,b}(\Cb)=1,\ B^{(1)}=2,\ B^{(2)}=1, \ \v_{3,b} \in C_7$,
                \item[b)] $\Cbd \in H_0$ and $\v_{3,b}(\Cbd)=2$, then $\v_{3,b}(\Cb)=2,\ B^{(1)}=1,\ B^{(2)}=2, \ \v_{3,b} \in C_9$,
                \item[c)] $\Cbd \in H_1$ and $\v_{3,b}(\Cbd)=2$, this case is not exist, if $\Cbd\in H_1$, then $\v_{3,b}(\Cbd)=3$,
                \item[d)] $\Cbd \in H_1$ and $\v_{3,b}(\Cbd)=3$, then $\v_{3,b}(\Cb)=2,\ B^{(1)}=2,\ B^{(3)}=1, \ \v_{3,b} \in C_9$.
            \end{itemize}
            Let $\Cb \in H_1$, the possible cases are
            \begin{itemize}
                \item[a)] $\Cbd \in H_0$ and $\v_{3,b}(\Cbd)=1$, then $\v_{3,b}(\Cb)=2,\ B^{(1)}=1,\ B^{(3)}=2, \ \v_{3,b} \in C_2$,
                \item[b)] $\Cbd \in H_0$ and $\v_{3,b}(\Cbd)=2$, this case is not exist, if $\Cbd\in H_0$, then $\v_{3,b}(\Cbd)=1$,
                \item[c)] $\Cbd \in H_1$ and $\v_{3,b}(\Cbd)=2$, then $\v_{3,b}(\Cb)=3,\ B^{(2)}=2,\ B^{(3)}=1,\ \v_{3,b} \in C_9$,
                \item[d)] $\Cbd \in H_1$ and $\v_{3,b}(\Cbd)=3$, then $\v_{3,b}(\Cb)=3,\ B^{(2)}=1,\ B^{(3)}=2,\ \v_{3,b}\in C_9$.
            \end{itemize}
            Therefore, configuration $\v_3$ a weakly periodic ground state on the set $A_2 \cap A_7 \cap A_9=A_7=A_9$.

        \item[4)] consider
        \begin{equation*}
        \v_4(x)=\left\{
        \begin{array}{llll}
        2&if&\xd \in H_0,& x \in H_0,\\
        1&if&\xd \in H_0,& x \in H_1,\\
        1&if&\xd \in H_1,& x \in H_0,\\
        1&if&\xd \in H_1,& x \in H_1.
        \end{array}\right.
        \end{equation*}
        Let $\Cb \in H_0$, the possible cases are
        \begin{itemize}
            \item[a)] $\Cbd \in H_0$ and $\v_{4,b}(\Cbd)=1$, then $\v_{4,b}(\Cb)=2,\ B^{(1)}=2,\ B^{(2)}=1, \ \v_{4,b} \in C_9$,
            \item[b)] $\Cbd \in H_0$ and $\v_{4,b}(\Cbd)=2$, then $\v_{4,b}(\Cb)=2,\ B^{(1)}=1,\ B^{(2)}=2, \ \v_{4,b} \in C_7$,
            \item[c)] $\Cbd \in H_1$ and $\v_{4,b}(\Cbd)=1$, then $\v_{4,b}(\Cb)=1,\ B^{(1)}=1,\ B^{(2)}=2, \ \v_{4,b} \in C_9$.
        \end{itemize}
        Let $\Cb \in H_1$, the possible cases are
        \begin{itemize}
            \item[a)] $\Cbd \in H_0$ and $\v_{4,b}(\Cbd)=1$, then $\v_{4,b}(\Cb)=1,\ B^{(1)}=3, \ \v_{4,b} \in C_3$,
            \item[b)] $\Cbd \in H_0$ and $\v_{4,b}(\Cbd)=2$, then $\v_{4,b}(\Cb)=1,\ B^{(1)}=2,\ B^{(2)}=1, \ \v_{4,b} \in C_7$,
            \item[c)] $\Cbd \in H_1$ and $\v_{4,b}(\Cbd)=1$, then $\v_{4,b}(\Cb)=1,\ B^{(1)}=3, \ \v_{4,b} \in C_9$.
        \end{itemize}
        Therefore, configuration $\v_4$ a weakly periodic ground state on the set $A_3 \cap A_7 \cap A_9=A_7=A_9$.

        \item[5)] Consider
        \begin{equation*}
        \v_5(x)=\left\{
        \begin{array}{llll}
        2&if&\xd \in H_0,& x \in H_0,\\
        1&if&\xd \in H_0,& x \in H_1,\\
        2&if&\xd \in H_1,& x \in H_0,\\
        2&if&\xd \in H_1,& x \in H_1.
        \end{array}\right.
        \end{equation*}
        Let $\Cb \in H_0$, the possible cases are
        \begin{itemize}
            \item[a)] $\Cbd \in H_0$ and $\v_{5,b}(\Cbd)=2$, then $\v_{5,b}(\Cb)=2,\ B^{(1)}=1,\ B^{(2)}=2, \ \v_{5,b} \in C_7$,
            \item[b)] $\Cbd \in H_1$ and $\v_{5,b}(\Cbd)=1$, this case is not exist, if $\Cbd\in H_1$, then $\v_{5,b}(\Cbd)=2$,
            \item[c)] $\Cbd \in H_1$ and $\v_{5,b}(\Cbd)=2$, then $\v_{5,b}(\Cb)=2,\ B^{(2)}=3, \ \v_{5,b} \in C_3$.
        \end{itemize}
        Let $\Cb \in H_1$, the possible cases are
        \begin{itemize}
            \item[a)] $\Cbd \in H_0$ and $\v_{5,b}(\Cbd)=2$, then $\v_{5,b}(\Cb)=1,\ B^{(2)}=3, \ \v_{5,b} \in C_2$,
            \item[b)] $\Cbd \in H_1$ and $\v_{5,b}(\Cbd)=1$, then $\v_{5,b}(\Cb)=2,\ B^{(1)}=2,\ B^{(2)}=1, \ \v_{5,b} \in C_9$,
            \item[c)] $\Cbd \in H_1$ and $\v_{5,b}(\Cbd)=2$, then $\v_{5,b}(\Cb)=2,\ B^{(2)}=3, \ \v_{5,b}\in C_3$.
        \end{itemize}
        Therefore, configuration $\v_5$ a weakly periodic ground state on the set $A_2 \cap A_3 \cap A_7 \cap A_9=A_7=A_9$.
        \item[6)] Consider
        \begin{equation*}
        \v_{6}(x)=\left\{
        \begin{array}{llll}
        2&if&\xd \in H_0,& x \in H_0,\\
        1&if&\xd \in H_0,& x \in H_1,\\
        3&if&\xd \in H_1,& x \in H_0,\\
        2&if&\xd \in H_1,& x \in H_1.
        \end{array}\right.
        \end{equation*}
        Let $\Cb \in H_0$, the possible cases are
        \begin{itemize}
            \item[a)] $\Cbd \in H_0$ and $\v_{6,b}(\Cbd)=2$, then $\v_{6,b}(\Cb)=2,\ B^{(2)}=2,\ B^{(3)}=3, \ \v_{6,b} \in C_7$,
            \item[b)] $\Cbd \in H_0$ and $\v_{6,b}(\Cbd)=3$, then $\v_{6,b}(\Cb)=2,\ B^{(1)}=B^{(2)}=B^{(3)}=1, \ \v_{6,b} \in C_9$,
            \item[c)] $\Cbd \in H_1$ and $\v_{6,b}(\Cbd)=1$, this case is not exist, if $\Cbd\in H_1$, then $\v_{6,b}(\Cbd)=2$,
            \item[d)] $\Cbd \in H_1$ and $\v_{6,b}(\Cbd)=2$, then $\v_{6,b}(\Cb)=3,\ B^{(2)}=3, \ \v_{6,b} \in C_2$.

        \end{itemize}
        Let $\Cb \in H_1$, the possible cases are
        \begin{itemize}
            \item[a)] $\Cbd \in H_0$ and $\v_{6,b}(\Cbd)=2$, then $\v_{6,b}(\Cb)=1,\ B^{(2)}=3, \ \v_{6,b} \in C_2$,
            \item[b)] $\Cbd \in H_0$ and $\v_{6,b}(\Cbd)=3$, this case is not exist, if $\Cbd\in H_0$, then $\v_{6,b}(\Cbd)=2$,
            \item[c)] $\Cbd \in H_1$ and $\v_{6,b}(\Cbd)=1$, then $\v_{6,b}(\Cb)=2,\ B^{(1)}=B^{(2)}=B^{(3)}=1,\ \v_{6,b} \in C_9$,
            \item[d)] $\Cbd \in H_1$ and $\v_{6,b}(\Cbd)=2$, then $\v_{6,b}(\Cb)=2,\ B^{(2)}=2,\ B^{(3)}=1,\ \v_{6,b} \in C_7$.
        \end{itemize}
        Therefore, configuration $\v_{6}$ a weakly periodic ground state on the set $A_2 \cap A_7 \cap A_9=A_7=A_9$.

        \item[7)] Consider
        \begin{equation*}
        \v_{7}(x)=\left\{
        \begin{array}{llll}
        2&if&\xd \in H_0,& x \in H_0,\\
        2&if&\xd \in H_0,& x \in H_1,\\
        1&if&\xd \in H_1,& x \in H_0,\\
        2&if&\xd \in H_1,& x \in H_1.
        \end{array}\right.
        \end{equation*}
        Let $\Cb \in H_0$, the possible cases are
        \begin{itemize}
            \item[a)] $\Cbd \in H_0$ and $\v_{7,b}(\Cbd)=1$, then $\v_{7,b}(\Cb)=2,\ B^{(1)}=1,\ B^{(2)}=2, \ \v_{7,b} \in C_7$,
            \item[b)] $\Cbd \in H_0$ and $\v_{7,b}(\Cbd)=2$, then $\v_{7,b}(\Cb)=2,\ B^{(2)}=3, \ \v_{7,b} \in C_3$,
            \item[c)] $\Cbd \in H_1$ and $\v_{7,b}(\Cbd)=2$, then $\v_{7,b}(\Cb)=1,\ B^{(2)}=3, \ \v_{7,b} \in C_2$.
        \end{itemize}
        Let $\Cb \in H_1$, the possible cases are
        \begin{itemize}
            \item[a)] $\Cbd \in H_0$ and $\v_{7,b}(\Cbd)=1$, then $\v_{7,b}(\Cb)=2,\ B^{(1)}=1,\ B^{(2)}=2, \ \v_{7,b} \in C_7$,
            \item[b)] $\Cbd \in H_0$ and $\v_{7,b}(\Cbd)=2$, then $\v_{7,b}(\Cb)=2,\ B^{(2)}=3, \ \v_{7,b} \in C_3$,
            \item[c)] $\Cbd \in H_1$ and $\v_{7,b}(\Cbd)=2$, then $\v_{7,b}(\Cb)=2,\ B^{(1)}=1,\ B^{(2)}=2, \ \v_{7,b} \in C_7$.
        \end{itemize}
        Therefore, configuration $\v_{7}$ a weakly periodic ground state on the set $A_2 \cap A_3 \cap A_7=A_7$.

        \item[8)] Consider
        \begin{equation*}
        \v_{8}(x)=\left\{
        \begin{array}{llll}
        2&if&\xd \in H_0,& x \in H_0,\\
        2&if&\xd \in H_0,& x \in H_1,\\
        2&if&\xd \in H_1,& x \in H_0,\\
        1&if&\xd \in H_1,& x \in H_1.
        \end{array}\right.
        \end{equation*}
        Let $\Cb \in H_0$, the possible cases are
        \begin{itemize}
            \item[a)] $\Cbd \in H_0$ and $\v_{8,b}(\Cbd)=2$, then $\v_{8,b}(\Cb)=2,\ B^{(2)}=3, \ \v_{8,b} \in C_3$,
            \item[b)] $\Cbd \in H_1$ and $\v_{8,b}(\Cbd)=1$, then $\v_{8,b}(\Cb)=2,\ B^{(1)}=1,\ B^{(2)}=2, \ \v_{8,b} \in C_7$,
            \item[c)] $\Cbd \in H_1$ and $\v_{8,b}(\Cbd)=2$, this case is not exist, if $\Cbd\in H_1$, then $\v_{8,b}(\Cbd)=1$.
        \end{itemize}
        Let $\Cb \in H_1$, the possible cases are
        \begin{itemize}
            \item[a)] $\Cbd \in H_0$ and $\v_{8,b}(\Cbd)=2$, then we can have two configuration:
            \begin{itemize}
                \item[(i)] $\v_{8,b}(\Cb)=2,\ B^{(1)}=1,\ B^{(2)}=2, \ \v_{8,b} \in C_7$,
                \item[(ii)] $\v_{8,b}(\Cb)=2,\ B^{(1)}=3, \ \v_{8,b} \in C_1$,
            \end{itemize}
            \item[b)] $\Cbd \in H_1$ and $\v_{8,b}(\Cbd)=1$, then $\v_{8,b}(\Cb)=1,\ B^{(1)}=2,\ B^{(2)}=1, \ \v_{8,b} \in C_7$,
            \item[c)] $\Cbd \in H_1$ and $\v_{8,b}(\Cbd)=2$, then $\v_{8,b}(\Cb)=1,\ B^{(1)}=1,\ B^{(2)}=2, \ \v_{8,b} \in C_9$.
        \end{itemize}
        Therefore, configuration $\v_{8}$ a weakly periodic ground state on the set $A_2 \cap A_3 \cap A_7 \cap A_9=A_7=A_9$.

        \item[9)] Consider
        \begin{equation*}
        \v_{9}(x)=\left\{
        \begin{array}{llll}
        2&if&\xd \in H_0,& x \in H_0,\\
        2&if&\xd \in H_0,& x \in H_1,\\
        2&if&\xd \in H_1,& x \in H_0,\\
        3&if&\xd \in H_1,& x \in H_1.
        \end{array}\right.
        \end{equation*}
        Let $\Cb \in H_0$, the possible cases are
        \begin{itemize}
            \item[a)] $\Cbd \in H_0$ and $\v_{9,b}(\Cbd)=2$, then $\v_{9,b}(\Cb)=2,\ B^{(2)}=3, \ \v_{9,b} \in C_3$,
            \item[b)] $\Cbd \in H_1$ and $\v_{9,b}(\Cbd)=2$, this case is not exist, if $\Cbd\in H_1$, then $\v_{9,b}(\Cbd)=3$,
            \item[c)] $\Cbd \in H_1$ and $\v_{9,b}(\Cbd)=3$, then $\v_{9,b}(\Cb)=2,\ B^{(2)}=2,\ B^{(3)}=1, \ \v_{9,b} \in C_7$.
        \end{itemize}
        Let $\Cb \in H_1$, the possible cases are
        \begin{itemize}
            \item[a)] $\Cbd \in H_0$ and $\v_{9,b}(\Cbd)=2$, then $\v_{9,b}(\Cb)=2,\ B^{(2)}=1,\ B^{(3)}=2, \ \v_{9,b} \in C_9$,
            \item[b)] $\Cbd \in H_1$ and $\v_{9,b}(\Cbd)=2$, then $\v_{9,b}(\Cb)=3,\ B^{(2)}=2,\ B^{(3)}=1, \ \v_{9,b} \in C_9$,
            \item[c)] $\Cbd \in H_1$ and $\v_{9,b}(\Cbd)=3$, then $\v_{9,b}(\Cb)=3,\ B^{(2)}=1,\ B^{(3)}=2, \ \v_{9,b} \in C_7$.
        \end{itemize}
        Therefore, configuration $\v_{9}$ a weakly periodic ground state on the set $A_3 \cap A_7 \cap A_9=A_7=A_9$.

        \item[10)] Consider
        \begin{equation*}
        \v_{10}(x)=\left\{
        \begin{array}{llll}
        2&if&\xd \in H_0,& x \in H_0,\\
        2&if&\xd \in H_0,& x \in H_1,\\
        3&if&\xd \in H_1,& x \in H_0,\\
        2&if&\xd \in H_1,& x \in H_1.
        \end{array}\right.
        \end{equation*}
        Let $\Cb \in H_0$, the possible cases are
        \begin{itemize}
            \item[a)] $\Cbd \in H_0$ and $\v_{10,b}(\Cbd)=2$, then $\v_{10,b}(\Cb)=2,\ B^{(2)}=3, \ \v_{10,b} \in C_3$,
            \item[b)] $\Cbd \in H_0$ and $\v_{10,b}(\Cbd)=3$, then $\v_{10,b}(\Cb)=2,\ B^{(2)}=2,\ B^{(3)}=1, \ \v_{10,b} \in C_7$,
            \item[c)] $\Cbd \in H_1$ and $\v_{10,b}(\Cbd)=2$, then $\v_{10,b}(\Cb)=2,\ B^{(2)}=3, \ \v_{10,b} \in C_7$.
        \end{itemize}
        Let $\Cb \in H_1$, the possible cases are
        \begin{itemize}
            \item[a)] $\Cbd \in H_0$ and $\v_{10,b}(\Cbd)=2$, then $\v_{10,b}(\Cb)=2,\ B^{(2)}=3, \ \v_{10,b}\in C_3$,
            \item[b)] $\Cbd \in H_0$ and $\v_{10,b}(\Cbd)=3$, this case is not exist, if $\Cbd\in H_0$, then $\v_{10,b}(\Cbd)=2$,
            \item[c)] $\Cbd \in H_1$ and $\v_{10,b}(\Cbd)=3$, then $\v_{10,b}(\Cb)=2,\ B^{(2)}=2,\ B^{(3)}=1, \ \v_{10,b} \in C_7$.
        \end{itemize}
        Therefore, configuration $\v_{10}$ a weakly periodic ground state on the set $A_2 \cap A_3 \cap A_7=A_7$.

    \item[11)] Consider
        \begin{equation*}
        \v_{11}(x)=\left\{
        \begin{array}{llll}
        2&if&\xd \in H_0,& x \in H_0,\\
        2&if&\xd \in H_0,& x \in H_1,\\
        1&if&\xd \in H_1,& x \in H_0,\\
        1&if&\xd \in H_1,& x \in H_1.
        \end{array}\right.
        \end{equation*}
            Let $\Cb \in H_0$, the possible cases are
            \begin{itemize}
                \item[a)] $\Cbd \in H_0$ and $\v_{11,b}(\Cbd)=1$, then $\v_{11,b}(\Cb)=2,\ B^{(1)}=1,\ B^{(2)}=2, \ \v_{11,b} \in C_7$,
                \item[b)] $\Cbd \in H_0$ and $\v_{11,b}(\Cbd)=2$, then $\v_{11,b}(\Cb)=2,\ B^{(2)}=3, \ \v_{11,b} \in C_3$,
                \item[c)] $\Cbd \in H_1$ and $\v_{11,b}(\Cbd)=1$, then $\v_{11,b}(\Cb)=1,\ B^{(1)}=1,\ B^{(2)}=2 \ \v_{11,b} \in C_9$,
                \item[d)] $\Cbd \in H_1$ and $\v_{11,b}(\Cbd)=2$, then $\v_{11,b}(\Cb)=1,\ B^{(1)}=2,\ B^{(2)}=1 \ \v_{11,b} \in C_7$.
            \end{itemize}
            Let $\Cb \in H_1$, the possible cases are
            \begin{itemize}
                \item[a)] $\Cbd \in H_0$ and $\v_{11,b}(\Cbd)=1$, then $\v_{11,b}(\Cb)=2,\ B^{(1)}=3, \ \v_{11,b} \in C_2$,
                \item[b)] $\Cbd \in H_0$ and $\v_{11,b}(\Cbd)=2$, then $\v_{11,b}(\Cb)=2,\ B^{(1)}=2,\ B^{(2)}=1,\ \v_{11,b} \in C_9$,
                \item[c)] $\Cbd \in H_1$ and $\v_{11,b}(\Cbd)=1$, then $\v_{11,b}(\Cb)=1,\ B^{(1)}=3,\ \v_{11,b}\in C_3$,
                \item[d)] $\Cbd \in H_1$ and $\v_{11,b}(\Cbd)=2$, then $\v_{11,b}(\Cb)=1,\ B^{(1)}=2,\ B^{(2)}=1,\ \v_{11,b} \in C_7$.
            \end{itemize}
            Therefore, configuration $\v_{11}$ a weakly periodic ground state on the set $A_2 \cap A_3 \cap A_7 \cap A_9=A_7=A_9$.

            \item[12)] Consider
            \begin{equation*}
            \v_{12}(x)=\left\{
            \begin{array}{llll}
            3&if&\xd \in H_0,& x \in H_0,\\
            3&if&\xd \in H_0,& x \in H_1,\\
            3&if&\xd \in H_1,& x \in H_0,\\
            2&if&\xd \in H_1,& x \in H_1.
            \end{array}\right.
            \end{equation*}
            Let $\Cb \in H_0$, the possible cases are
            \begin{itemize}
                \item[a)] $\Cbd \in H_0$ and $\v_{12,b}(\Cbd)=3$, then $\v_{12,b}(\Cb)=3,\ B^{(3)}=3,\ \v_{12,b} \in C_3$,
                \item[b)] $\Cbd \in H_1$ and $\v_{12,b}(\Cbd)=2$, then $\v_{12,b}(\Cb)=3,\ B^{(2)}=1,\ B^{(3)}=2, \ \v_{12,b} \in C_7$,
                \item[c)] $\Cbd \in H_1$ and $\v_{12,b}(\Cbd)=3$, this case is not exist, if $\Cbd\in H_1$, then $\v_{12,b}(\Cbd)=2$.
            \end{itemize}
            Let $\Cb \in H_1$, the possible cases are
            \begin{itemize}
                \item[a)] $\Cbd \in H_0$ and $\v_{12,b}(\Cbd)=3$, then $\v_{12,b}(\Cb)=3,\ B^{(2)}=2,\ B^{(3)}=1, \ \v_{12,b} \in C_9$,
                \item[b)] $\Cbd \in H_1$ and $\v_{12,b}(\Cbd)=2$, then $\v_{12,b}(\Cb)=2,\ B^{(2)}=2, \ B^{(3)}=1, \ \v_{12,b} \in C_7$,
                \item[c)] $\Cbd \in H_1$ and $\v_{12,b}(\Cbd)=3$, then $\v_{12,b}(\Cb)=2,\ B^{(2)}=1,\ B^{(3)}=2, \ \v_{12,b}\in C_9$.
            \end{itemize}
            Therefore, configuration $\v_{12}$ a weakly periodic ground state on the set $A_1 \cap A_3 \cap A_7 \cap A_9=A_7=A_9$.

            \item[13)] Consider
            \begin{equation*}
            \v_{13}(x)=\left\{
            \begin{array}{llll}
            1&if&\xd \in H_0,& x \in H_0,\\
            1&if&\xd \in H_0,& x \in H_1,\\
            1&if&\xd \in H_1,& x \in H_0,\\
            2&if&\xd \in H_1,& x \in H_1.
            \end{array}\right.
            \end{equation*}
            Let $\Cb \in H_0$, the possible cases are
            \begin{itemize}
                \item[a)] $\Cbd \in H_0$ and $\v_{13,b}(\Cbd)=1$, then $\v_{13,b}(\Cb)=1,\ B^{(1)}=3,\ \v_{13,b} \in C_3$,
                \item[b)] $\Cbd \in H_1$ and $\v_{13,b}(\Cbd)=1$, this case is not exist, if $\Cbd\in H_1$, then $\v_{13,b}(\Cbd)=2$,
                \item[c)] $\Cbd \in H_1$ and $\v_{13,b}(\Cbd)=2$, then $\v_{3,b}(\Cb)=2,\ B^{(1)}=2,\ B^{(2)}=1, \ \v_{13,b} \in C_7$.
            \end{itemize}
            Let $\Cb \in H_1$, the possible cases are
            \begin{itemize}
                \item[a)] $\Cbd \in H_0$ and $\v_{13,b}(\Cbd)=1$, then $\v_{13,b}(\Cb)=1,\ B^{(1)}=1,\ B^{(2)}=2, \ \v_{13,b} \in C_9$,
                \item[b)] $\Cbd \in H_1$ and $\v_{13,b}(\Cbd)=1$, then $\v_{13,b}(\Cb)=2,\ B^{(1)}=2, \ B^{(2)}=1, \ \v_{13,b} \in C_9$,
                \item[c)] $\Cbd \in H_1$ and $\v_{13,b}(\Cbd)=2$, then $\v_{13,b}(\Cb)=2,\ B^{(1)}=1,\ B^{(2)}=2, \ \v_{13,b} \in C_7$.
            \end{itemize}
            Therefore, configuration $\v_{13}$ a weakly periodic ground state on the set $A_3 \cap A_7 \cap A_9=A_7=A_9$.

            \item[14)] Consider
            \begin{equation*}
            \v_{14}(x)=\left\{
            \begin{array}{llll}
            1&if&\xd \in H_0,& x \in H_0,\\
            2&if&\xd \in H_0,& x \in H_1,\\
            2&if&\xd \in H_1,& x \in H_0,\\
            2&if&\xd \in H_1,& x \in H_1.
            \end{array}\right.
            \end{equation*}
            Let $\Cb \in H_0$, the possible cases are
            \begin{itemize}
                \item[a)] $\Cbd \in H_0$ and $\v_{14,b}(\Cbd)=1$, then $\v_{14,b}(\Cb)=1,\ B^{(1)}=2,\ B^{(2)}=1, \ \v_{14,b} \in C_7$,
                \item[b)] $\Cbd \in H_0$ and $\v_{14,b}(\Cbd)=2$, then $\v_{14,b}(\Cb)=1,\ B^{(1)}=1, \ B^{(2)}=2, \ \v_{14,b} \in C_9$,
                \item[c)] $\Cbd \in H_1$ and $\v_{14,b}(\Cbd)=2$, then $\v_{14,b}(\Cb)=1,\ B^{(1)}=1,\ B^{(2)}=2, \ \v_{14,b}\in C_9$.
            \end{itemize}
            Let $\Cb \in H_1$, the possible cases are
            \begin{itemize}
                \item[a)] $\Cbd \in H_0$ and $\v_{14,b}(\Cbd)=1$, then $\v_{14,b}(\Cb)=2,\ B^{(1)}=1,\ B^{(2)}=2,\ \v_{14,b} \in C_7$,
                \item[b)] $\Cbd \in H_1$ and $\v_{14,b}(\Cbd)=2$, this case is not exist, if $\Cbd\in H_1$, then $\v_{14,b}(\Cbd)=1$,
                \item[c)] $\Cbd \in H_1$ and $\v_{14,b}(\Cbd)=2$, then $\v_{14,b}(\Cb)=2,\ B^{(2)}=3,\ \v_{14,b} \in C_3$.
            \end{itemize}
            Therefore, configuration $\v_{14}$ a weakly periodic ground state on the set $A_3 \cap A_7 \cap A_9=A_7=A_9$.

    \item[15)] Consider
        \begin{equation*}
        \v_6(x)=\left\{
        \begin{array}{llll}
        1&if&\xd \in H_0,& x \in H_0,\\
        3&if&\xd \in H_0,& x \in H_1,\\
        3&if&\xd \in H_1,& x \in H_0,\\
        3&if&\xd \in H_1,& x \in H_1.
        \end{array}\right.
        \end{equation*}
        Let $\Cb \in H_0$, the possible cases are
        \begin{itemize}
            \item[a)] $\Cbd \in H_0$ and $\v_{15,b}(\Cbd)=1$, then $\v_{15,b}(\Cb)=1,\ B^{(1)}=2,\ B^{(3)}=1, \ \v_{15,b} \in C_5$,
            \item[b)] $\Cbd \in H_0$ and $\v_{15,b}(\Cbd)=3$, then $\v_{15,b}(\Cb)=1,\ B^{(1)}=1,\ B^{(3)}=2, \ \v_{15,b} \in C_8$,
            \item[c)] $\Cbd \in H_1$ and $\v_{15,b}(\Cbd)=3$, then $\v_{15,b}(\Cb)=3,\ B^{(1)}=2,\ B^{(3)}=1, \ \v_{15,b} \in C_8$.
        \end{itemize}
        Let $\Cb \in H_1$, the possible cases are
        \begin{itemize}
            \item[a)] $\Cbd \in H_0$ and $\v_{15,b}(\Cbd)=1$, then $\v_{15,b}(\Cb)=3,\ B^{(1)}=1,\ B^{(3)}=2, \ \v_{15,b} \in C_5$,
            \item[b)] $\Cbd \in H_0$ and $\v_{15,b}(\Cbd)=3$, this case is not exist, if $\Cbd\in H_0$, then $\v_{15,b}(\Cbd)=1$,
            \item[c)] $\Cbd \in H_1$ and $\v_{15,b}(\Cbd)=3$, then $\v_{15,b}(\Cb)=3,\ B^{(3)}=3,\ \v_{15,b}\in C_3$.
        \end{itemize}
        Therefore, configuration $\v_{15}$ a weakly periodic ground state on the set $A_3 \cap A_5 \cap A_8=A_5=A_8$.

    \item[16)] Consider
            \begin{equation*}
            \v_{16}(x)=\left\{
            \begin{array}{llll}
            3&if&\xd \in H_0,& x \in H_0,\\
            1&if&\xd \in H_0,& x \in H_1,\\
            1&if&\xd \in H_1,& x \in H_0,\\
            3&if&\xd \in H_1,& x \in H_1.
            \end{array}\right.
            \end{equation*}
            Let $\Cb \in H_0$, the possible cases are
            \begin{itemize}
                \item[a)] $\Cbd \in H_0$ and $\v_{16,b}(\Cbd)=1$, then $\v_{16,b}(\Cb)=3,\ B^{(1)}=2,\ B^{(3)}=1, \ \v_{16,b} \in C_8$,
                \item[b)] $\Cbd \in H_0$ and $\v_{16,b}(\Cbd)=3$, then $\v_{16,b}(\Cb)=3,\ B^{(1)}=1,\ B^{(3)}=2, \ \v_{16,b} \in C_5$,
                \item[c)] $\Cbd \in H_1$ and $\v_{16,b}(\Cbd)=1$, this case is not exist, if $\Cbd\in H_1$, then $\v_{16,b}(\Cbd)=3$,
                \item[d)] $\Cbd \in H_1$ and $\v_{16,b}(\Cbd)=3$, then $\v_{16,b}(\Cb)=1,\ B^{(3)}=3,\ \v_{16,b} \in C_1$.
            \end{itemize}
            Let $\Cb \in H_1$, the possible cases are
            \begin{itemize}
                \item[a)] $\Cbd \in H_0$ and $\v_{16,b}(\Cbd)=1$, then $\v_{16,b}(\Cb)=1,\ B^{(1)}=1,\ B^{(3)}=2, \ \v_{16,b} \in C_8$,
                \item[b)] $\Cbd \in H_0$ and $\v_{16,b}(\Cbd)=3$, then $\v_{16,b}(\Cb)=1,\ B^{(3)}=3,\ \v_{16,b} \in C_2$,
                \item[c)] $\Cbd \in H_1$ and $\v_{16,b}(\Cbd)=1$, then $\v_{16,b}(\Cb)=3,\ B^{(1)}=2,\ B^{(3)}=1,\ \v_{16,b} \in C_8$,
                \item[d)] $\Cbd \in H_1$ and $\v_{16,b}(\Cbd)=3$, then $\v_{16,b}(\Cb)=3,\ B^{(1)}=1,\ B^{(3)}=2,\ \v_{16,b} \in C_5$.
            \end{itemize}
            Therefore, configuration $\v_{16}$ a weakly periodic ground state on the set $A_1 \cap A_5 \cap A_8=A_5=A_8$.

            \item[17)] Consider
            \begin{equation*}
            \v_{17}(x)=\left\{
            \begin{array}{llll}
            3&if&\xd \in H_0,& x \in H_0,\\
            3&if&\xd \in H_0,& x \in H_1,\\
            1&if&\xd \in H_1,& x \in H_0,\\
            1&if&\xd \in H_1,& x \in H_1.
            \end{array}\right.
            \end{equation*}
            Let $\Cb \in H_0$, the possible cases are
            \begin{itemize}
                \item[a)] $\Cbd \in H_0$ and $\v_{17,b}(\Cbd)=1$, then $\v_{17,b}(\Cb)=3,\ B^{(1)}=1,\ B^{(3)}=2, \ \v_{17,b} \in C_5$,
                \item[b)] $\Cbd \in H_0$ and $\v_{17,b}(\Cbd)=3$, then $\v_{17,b}(\Cb)=3,\ B^{(3)}=3, \ \v_{17,b} \in C_3$,
                \item[c)] $\Cbd \in H_1$ and $\v_{17,b}(\Cbd)=1$, then $\v_{17,b}(\Cb)=1,\ B^{(1)}=1,\ B^{(3)}=2,\ \v_{17,b} \in C_8$,
                \item[d)] $\Cbd \in H_1$ and $\v_{17,b}(\Cbd)=3$, this case is not exist, if $\Cbd\in H_1$, then $\v_{17,b}(\Cbd)=1$.
            \end{itemize}
            Let $\Cb \in H_1$, the possible cases are
            \begin{itemize}
                \item[a)] $\Cbd \in H_0$ and $\v_{17,b}(\Cbd)=1$, then $\v_{17,b}(\Cb)=3,\ B^{(1)}=3,\ \v_{17,b} \in C_1$,
                \item[b)] $\Cbd \in H_0$ and $\v_{17,b}(\Cbd)=3$, then $\v_{17,b}(\Cb)=3,\ B^{(1)}=2,\ B^{(3)}=1,\ \v_{17,b} \in C_8$,
                \item[c)] $\Cbd \in H_1$ and $\v_{17,b}(\Cbd)=1$, then $\v_{17,b}(\Cb)=1,\ B^{(1)}=3,\ \v_{17,b} \in C_3$,
                \item[d)] $\Cbd \in H_1$ and $\v_{17,b}(\Cbd)=3$, then $\v_{17,b}(\Cb)=1,\ B^{(1)}=2,\ B^{(3)}=1,\ \v_{17,b} \in C_5$.
            \end{itemize}
            Therefore, configuration $\v_{17}$ a weakly periodic ground state on the set $A_1 \cap A_3 \cap A_5 \cap A_8=A_5=A_8$.

            \item[18)] Consider
            \begin{equation*}
            \v_{18}(x)=\left\{
            \begin{array}{llll}
            3&if&\xd \in H_0,& x \in H_0,\\
            3&if&\xd \in H_0,& x \in H_1,\\
            1&if&\xd \in H_1,& x \in H_0,\\
            3&if&\xd \in H_1,& x \in H_1.
            \end{array}\right.
            \end{equation*}
            Let $\Cb \in H_0$, the possible cases are
            \begin{itemize}
                \item[a)] $\Cbd \in H_0$ and $\v_{18,b}(\Cbd)=1$, then $\v_{18,b}(\Cb)=3,\ B^{(1)}=1,\ B^{(3)}=2, \ \v_{18,b} \in C_5$,
                \item[b)] $\Cbd \in H_0$ and $\v_{18,b}(\Cbd)=3$, then $\v_{18,b}(\Cb)=3,\ B^{(3)}=3,\ \v_{18,b}\in C_3$,
                \item[c)] $\Cbd \in H_1$ and $\v_{18,b}(\Cbd)=3$, then $\v_{18,b}(\Cb)=1,\ B^{(3)}=3, \ \v_{18,b}\in C_1$.
            \end{itemize}
            Let $\Cb \in H_1$, the possible cases are
            \begin{itemize}
                \item[a)] $\Cbd \in H_0$ and $\v_{18,b}(\Cbd)=1$, then $\v_{18,b}(\Cb)=3,\ B^{(1)}=1,\ B^{(3)}=2, \ \v_{18,b} \in C_5$,
                \item[b)] $\Cbd \in H_0$ and $\v_{18,b}(\Cbd)=3$, then $\v_{18,b}(\Cb)=3,\ B^{(3)}=3, \ \v_{18,b} \in C_3$,
                \item[c)] $\Cbd \in H_1$ and $\v_{18,b}(\Cbd)=3$, then $\v_{18,b}(\Cb)=3,\ B^{(1)}=1,\ B^{(3)}=2, \ \v_{18,b} \in C_5$.
            \end{itemize}
            Therefore, configuration $\v_{18}$ a weakly periodic ground state on the set $A_1 \cap A_3 \cap A_5=A_5$.

            \item[19)] Consider
            \begin{equation*}
            \v_{19}(x)=\left\{
            \begin{array}{llll}
            3&if&\xd \in H_0,& x \in H_0,\\
            3&if&\xd \in H_0,& x \in H_1,\\
            3&if&\xd \in H_1,& x \in H_0,\\
            1&if&\xd \in H_1,& x \in H_1.
            \end{array}\right.
            \end{equation*}
            Let $\Cb \in H_0$, the possible cases are
            \begin{itemize}
                \item[a)] $\Cbd \in H_0$ and $\v_{19,b}(\Cbd)=3$, then $\v_{19,b}(\Cb)=3,\ B^{(3)}=3,\ \v_{19,b} \in C_3$,
                \item[b)] $\Cbd \in H_1$ and $\v_{19,b}(\Cbd)=1$, then $\v_{19,b}(\Cb)=3,\ B^{(1)}=1,\ B^{(3)}=2,\ \v_{19,b} \in C_5$,
                \item[c)] $\Cbd \in H_1$ and $\v_{19,b}(\Cbd)=3$, then $\v_{19,b}(\Cb)=3,\ B^{(1)}=1,\ B^{(3)}=2, \ \v_{19,b} \in C_5$.
            \end{itemize}
            Let $\Cb \in H_1$, the possible cases are
            \begin{itemize}
                \item[a)] $\Cbd \in H_0$ and $\v_{19,b}(\Cbd)=3$, then $\v_{19,b}(\Cb)=3,\ B^{(1)}=2,\ B^{(3)}=1, \ \v_{19,b}\in C_8$,
                \item[b)] $\Cbd \in H_1$ and $\v_{19,b}(\Cbd)=1$, then $\v_{19,b}(\Cb)=1,\ B^{(1)}=2, \ B^{(3)}=1, \ \v_{19,b} \in C_5$,
                \item[c)] $\Cbd \in H_1$ and $\v_{19,b}(\Cbd)=3$, then $\v_{19,b}(\Cb)=3,\ B^{(1)}=1,\ B^{(3)}=2, \ \v_{19,b} \in C_8$.
            \end{itemize}
            Therefore, configuration $\v_{19}$ a weakly periodic ground state on the set $A_3 \cap A_5 \cap A_8=A_5=A_8$.

    \item[20)] Consider
            \begin{equation*}
            \v_{20}(x)=\left\{
            \begin{array}{llll}
            1&if&\xd \in H_0,& x \in H_0,\\
            3&if&\xd \in H_0,& x \in H_1,\\
            3&if&\xd \in H_1,& x \in H_0,\\
            1&if&\xd \in H_1,& x \in H_1.
            \end{array}\right.
            \end{equation*}
            Let $\Cb \in H_0$, the possible cases are
            \begin{itemize}
                \item[a)] $\Cbd \in H_1$ and $\v_{20,b}(\Cbd)=1$, then $\v_{20,b}(\Cb)=3,\ B^{(1)}=3,\ \v_{20,b} \in C_3$,
                \item[b)] $\Cbd \in H_0$ and $\v_{20,b}(\Cbd)=1$, then $\v_{20,b}(\Cb)=1,\ B^{(1)}=2,\ B^{(3)}=1,\ \v_{20,b} \in C_5$,
                \item[c)] $\Cbd \in H_0$ and $\v_{20,b}(\Cbd)=3$, then $\v_{20,b}(\Cb)=1,\ B^{(1)}=2,\ B^{(3)}=1, \ \v_{20,b} \in C_5$.
            \end{itemize}
            Let $\Cb \in H_1$, the possible cases are
            \begin{itemize}
                \item[a)] $\Cbd \in H_0$ and $\v_{20,b}(\Cbd)=1$, then $\v_{20,b}(\Cb)=3,\ B^{(1)}=3,\ B^{(3)}=0, \ \v_{20,b} \in C_3$,
                \item[b)] $\Cbd \in H_1$ and $\v_{20,b}(\Cbd)=1$, then $\v_{20,b}(\Cb)=3,\ B^{(1)}=2, \ B^{(3)}=1, \ \v_{20,b} \in C_8$,
                \item[c)] $\Cbd \in H_1$ and $\v_{20,b}(\Cbd)=3$, then $\v_{20,b}(\Cb)=1,\ B^{(1)}=1,\ B^{(3)}=2, \ \v_{20,b} \in C_5$.
            \end{itemize}
            Therefore, configuration $\v_{20}$ a weakly periodic ground state on the set $A_3 \cap A_5 \cap A_8=A_5=A_8$.

    \end{itemize}

Let we prove assertion III. We consider the remaining configurations. For example, let:
\begin{equation*}
            \v(x)=\left\{
            \begin{array}{llll}
            3&if&\xd \in H_0,& x \in H_0,\\
            1&if&\xd \in H_0,& x \in H_1,\\
            3&if&\xd \in H_1,& x \in H_0,\\
            3&if&\xd \in H_1,& x \in H_1.
            \end{array}\right.
\end{equation*}

We show that $\v(x)$ is not a ground state.

Let $\Cb \in H_0$, the possible cases are
            \begin{itemize}
                \item[a)] $\Cbd \in H_0$ and $\v_{b}(\Cbd)=3$, then $\v_{b}(\Cb)=3,\ B^{(1)}=1,\ B^{(3)}=2,\ \v_b \in C_8$,
                \item[b)] $\Cbd \in H_1$ and $\v_{b}(\Cbd)=3$, then $\v_{b}(\Cb)=3,\ B^{(1)}=0,\ B^{(3)}=3,\ \v_b \in C_1$.
            \end{itemize}
            Let $\Cb \in H_1$, the possible cases are
            \begin{itemize}
                \item[a)] $\Cbd \in H_0$ and $\v_{b}(\Cbd)=3$, then $\v_{b}(\Cb)=1,\ B^{(1)}=0,\ B^{(3)}=3, \ \v_b \in C_3$,
                \item[b)] $\Cbd \in H_1$ and $\v_{b}(\Cbd)=1$, then $\v_{b}(\Cb)=3,\ B^{(1)}=1, \ B^{(3)}=2, \ \v_b \in C_8$,
                \item[c)] $\Cbd \in H_1$ and $\v_{b}(\Cbd)=3$, then $\v_{b}(\Cb)=3,\ B^{(1)}=0,\ B^{(3)}=3, \ \v_b \in C_1$.
            \end{itemize}

We conclude that the configuration $\v(x)$ is a ground state on the set $\{(a,b,c)\in R^3: a=b=c\}.$ Therefore, if $(a, b, c)\notin \{(a,b,c)\in R^3: a=b=c\}$, then the weakly periodic configuration $\v(x)$ is not a weakly periodic ground state.

As above, we analyze all possible cases, except for periodic (in particular translation-invariant) configurations and the configurations mentioned in assertions I and II we find that none of them is a ground state. This finishes the proof of Theorem \ref{th2}.
\end{proof}

\begin{rem}
    Notice that the all configurations on the set $\{(a,b,c)\in R^3: a=b=c\}$ are ground state.
\end{rem}

\textbf{Aknowledgments.} The author RM thanks IIUM for providing financial
support (grant FRGS 14-116-0357) and all facilities. The author also thanks the organizers of the 37-th International Conference on Quantum Probability and Related Topics (QP37)  for travel support.

\smallskip

\end{document}